\newtheorem{thm}{Theorem}
\newtheorem{proposition}{Proposition}
\newtheorem{lemma}{Lemma}
\newtheorem{definition}{Definition}
\title[QND: parameter estimation for mixtures of multinomials]{Quantum Non Demolition Measurements: \\parameter estimation for mixtures of multinomials}
\author{T. Benoist}
\address{Institut de Math\'ematiques de Toulouse UMR5219, Universit\'e de Toulouse ; CNRS, \'Equipe de Statistique et Probabilit\'es,
Universit\'e Paul Sabatier, 31062 Toulouse Cedex 9, France}
\email{tristan.benoist@math.univ-toulouse.fr}
\author{F.Gamboa}
\address{Institut de Math\'ematiques de Toulouse UMR5219, Universit\'e de Toulouse ; CNRS, \'Equipe de Statistique et Probabilit\'es,
Universit\'e Paul Sabatier, 31062 Toulouse Cedex 9, France}
\email{fabrice.gamboa@math.univ-toulouse.fr}
\author{C. Pellegrini}
\address{Institut de Math\'ematiques de Toulouse UMR5219, Universit\'e de Toulouse ; CNRS, \'Equipe de Statistique et Probabilit\'es,
Universit\'e Paul Sabatier, 31062 Toulouse Cedex 9, France}
\email{clement.pellegrini@math.univ-toulouse.fr}
\date{\today}                                          
\begin{document}
\begin{abstract}
In Quantum Non Demolition measurements, the sequence of observations is distributed as a mixture of multinomial random variables. Parameters of the dynamics are naturally encoded into this family of distributions. We show the local asymptotic mixed normality of the underlying statistical model and the consistency of the maximum likelihood estimator. Furthermore, we prove the asymptotic optimality of this estimator as it saturates the usual Cram\'er Rao bound.
\end{abstract}

\maketitle

\section{Introduction}
Measuring directly a small {\it quantum sized} physical system is done by letting it interact with a macroscopic instrument. This procedure can result in the destruction of the measured system. For example photons are absorbed to create an electronic signal. To avoid the destruction of the measured system, one relies then on indirect measurements. The system first interacts with an auxiliary system, or probe, that is then measured. The goal is then to infer the system state from the information obtained through this indirect measurement. Though, from the laws of quantum mechanics, this procedure induces a back action on the system that may change its state. Moreover, the measurement outcome being inherently random, the system state may become random itself. Hence, if one aims at measuring indirectly a physical quantity of the system, the indirect measurement must be tuned such that, if the system state corresponds to an almost sure value of the physical quantity of interest, then, indirectly measuring it will not modify its state. This kind of indirect measurements is called Quantum Non Demolition (QND) measurements. It has been  introduced in the eighties as a technique for precise measurements \cite{QND_origin}. Maybe one of the experiment illustrating best QND measurements is Haroche's group one. By sending atoms through a supra conducting cavity containing a monochromatic electromagnetic field, and measuring the atoms, it is possible to measure the number of photons inside the cavity whithout destroying them\cite{guerlin}.

To increase the amount of information on the system state obtained through a QND measurement, the procedure is repeated.
The system evolution is then described by an unobserved Markov chain $(\phi_n)$ (see Section \ref{sec:QND} for the complete description). The only handy observation is the sequence of the measurement results denoted by $(X_n)$.  
Baye's law maps the information of $(X_n)$ in the evolution of the Markov chain.

In general, the sequence  of random variables $(X_n)$ is not i.i.d. (even not Markovian). Therefore, statistical inference for QND measurement  cannot fully rely on standard results on  i.i.d. models.  Efficient parameter estimation is of course crucial for these experiments. Particularly if one hopes to have a faithful estimation of the system state Markov chain $(\phi_n)$.

In this paper, we show that the QND model enters perfectly in the framework of the usual statistical asymptotic theory. More precisely, we provide a complete study in terms of  local asymptotic mixed normality (LAMN) of the model (we refer to \cite{van} for the whole theory). It is worth noticing that a quantum analog of local asymptotic  normality (LAN) called quantum local asymptotic normality (QLAN) has been developed in the context of quantum statistics \cite{Guta1,Guta2}. Similarly quantum extensions of classical notions such as Quantum Fisher Information and Cram\'er Rao bound have been developed \cite[Section 2.2.5]{holevo}. Here, we will not follow this approach and will concentrate on more classical statistical properties.  Our results rely on the fact that our model, thanks to the QND condition, is actually a mixture of i.i.d. statistical models. More precisely, it has been show in \cite{bbb1,bbb2,bbb3,bp,jf} that the probability space describing these experiments can be divided into asymptotic events (belonging to the tail algebra) such that $(X_n)$ conditioned to one of such asymptotic event is a sequence of i.i.d. random variables. So that, the law of $(X_n)$ is a mixture of i.i.d. laws. The weights involved into the mixture depends on the initial state of the system.

The conditioning making $(X_n)$ an i.i.d. sequence is highly exploited in order to derive the LAMN property. To our knowledge this is the first time that the LAMN property is shown in this context.
After proving the LAMN property we study the maximum likelihood estimation and prove that it is optimal in the sense that the Cram\'er Rao bound is achieved asymptotically. Note that parametric estimation for indirect measurements has been previously investigated in \cite{Guta3,Guta4,Rouchon} with different assumptions. But, the theory of asymptotic likelihood has not been studied therein. 
\smallskip

The paper is organised as follows. In Section \ref{sec:multi}, we discuss the model of multinomial mixture studied along the paper. In Section \ref{sec:LAMN} we show the local asymptotic mixed normality. Section \ref{sec:maxlike} is devoted to the results for the maximum likelihood estimator (consistency and saturation of the Cram\'er Rao bound).  Finally in Section \ref{sec:QND} we work on the QND model underlining the link with multinomial mixtures. Further, some numerical simulations illustrate our results on a QND toy model inspired by \cite{guerlin}.

\section{Mixture of multinomials}\label{sec:multi}
Let $\mathcal A=\{1,\ldots,l\}$, $\mathcal P=\{1,\ldots,d\}$ and $\Theta$ be a compact subset of $\mathbb R^D$ with a non empty interior. For any $\alpha\in\mathcal P$ and $\theta\in\Theta$, the quantity
$(p_\theta(j|\alpha))_{j\in\mathcal A}$
denotes a probability distribution over $\mathcal A$, that is $p_\theta(j\vert\alpha)\in(0,1)$ and $\sum_{j\in\mathcal A}p_\theta(j\vert\alpha)=1$. In the sequel for any $j\in\mathcal A$ and $\alpha\in\mathcal P$ the notation $p_{.}(j\vert\alpha)$ holds for the function $\theta\mapsto p_\theta(j\vert\alpha)$. 

Let $\mathbb P$ be a probability distribution, we will use the notation $\stackrel{\mathcal L-\mathbb P}=$ to mean equality in distribution when the underlying probability space is endowed with $\mathbb P$. This notation will also be used for convergence in distribution  writing sometimes  $\stackrel{\mathcal L-\mathbb P_n}=$ when a family of probability measures $(\mathbb P_n)$ is involved. In the paper the notation $\mathcal N(m,\sigma^2)$ is used for the Gaussian distribution of mean $m$ and variance $\sigma^2$. For any $x\in\mathbb R^D$ and any subset $A\subset\mathbb R^D,$ the set $A+x$ will denote $A+x=\{y+x,y\in A\}$.

Let us now describe the probability model that we will study. Let $\Omega=\mathcal A^{\mathbb N}$ and let $\mathcal F$ be the smallest $\sigma$-algebra containing the cylinder sets $\{\omega\in\Omega|\omega_k=j_k, \forall k\leq n\}$. All the measures introduced afterwards are defined on the measurable space $(\Omega,\mathcal F)$ without mentioning it.

For any $\theta\in\Theta$ and for each $\alpha\in\mathcal P$, let $\mathbb P_{\theta|\alpha}$ be the multinomial probability measure built on the weights $(p_\theta(j|\alpha))_{j\in\mathcal A}$. Namely, for any $n$-tuple $(j_1,\ldots,j_n)\in\mathcal A^n$,
$$\mathbb P_{\theta|\alpha}(j_1,j_2,\ldots,j_n)=\prod_{k=1}^n p_\theta(j_k|\alpha).$$
Let $(q(\alpha))_{\alpha\in\mathcal P}$ be a probability measure on $\mathcal P$, we denote by $\mathbb P^q_\theta$ the probability measure defined as a convex combination of the measure $\mathbb P_{\theta|\alpha}$ with weights $(q(\alpha))_{\alpha\in\mathcal P}$:
$$\mathbb P^q_\theta=\sum_{\alpha\in\mathcal P}q(\alpha)\mathbb P_{\theta|\alpha}$$
Without loss of generality, we shall always assume that $q(\alpha)>0$ for all $\alpha\in\mathcal P$. Indeed, one can reduce the set $\mathcal P$ if needed.

We study the statistical model $(\mathbb P^q_\theta, \theta\in\Theta)$. As we shall see, under the identifiability condition below, our results will be independent of $q$. Hence, in the sequel we shall alleviate the notation replacing $\mathbb P_\theta^q$ by $\mathbb P_\theta$.
\medskip

Our main assumption on the different multinomials is the following: 

\textbf{Assumption ID:} For any $(\alpha,\theta)$ and $(\beta,\theta')\in\mathcal P\times\Theta$ such that $(\alpha,\theta)\neq (\beta,\theta')$, $\exists j\in\mathcal A$ such that,
$$p_{\theta}(j|\alpha)\neq p_{\theta'}(j|\beta).$$

\textbf{Remark:} If for every $\theta\in\Theta$, Assumption {\bf ID} does not hold for two couples $(\alpha,\theta)$ and $(\beta,\theta)$, it can be enforced by reducing $\mathcal P$, identifying the elements giving the same distributions $p_\theta(\cdot|\alpha/\beta)$.

From now on, we will denote by $\theta^*\in \Theta$ the true value of the parameter $\theta$. We assume that $\theta^*$ is in the interior of $\Theta$. 
The following definition will be usefull.

\begin{definition} 
For any $\theta^\ast\in\Theta$ and $\gamma\in\mathcal P$, let,
$$\Omega_{\theta^*|\gamma}:=\operatorname{supp} \mathbb P_{\theta^*|\gamma}\quad\text{and}\quad \Gamma_{\theta^*}:=\sum_{\gamma\in\mathcal P} \gamma\mathbf{1}_{\Omega_{\theta^*|\gamma}}.$$
\end{definition}
When it does not lead to confusion we may omit the index $\theta^*$ for both the sets $\Omega_{\theta^*|\gamma}$ and the random variable $\Gamma_{\theta^*}$.

\noindent\textbf{Remark:} It is a direct consequence of Assumption {\bf ID} that 
$$\mathbb P_{\theta^*}(\Omega_{\theta^*|\gamma})=\mathbb P_{\theta^*}(\Gamma_{\theta^*}=\gamma)=q(\gamma).$$

At this stage we need to introduce some quantities  quantifying the information and proximity in our models. In particular, we shall use many times the Shannon entropy given a parameter $\theta$ and the Kullback--Leibler divergence given $\theta$ with respect to $\theta'$. For $\alpha,\beta\in\mathcal P$ and $\theta,\theta'\in\Theta$, let
\begin{equation}\label{eq:Sha}S_{\theta}(\alpha):=-\sum_{j\in\mathcal A}p_\theta(j|\alpha)\ln p_\theta(j|\alpha),\end{equation}
be the Shannon entropy and
\begin{equation}\label{eq:KLd}S_{\theta|\theta'}(\alpha|\beta):=\sum_{j\in\mathcal A}p_\theta(j|\alpha)\big(\ln p_\theta(j|\alpha)-\ln p_{\theta'}(j|\beta)\big),
\end{equation}
be the Kullback--Leibler divergence. In Eq \eqref{eq:KLd}, when $\theta=\theta'$ we just write $S_\theta$, that is for $\alpha,\gamma\in\mathcal P$ and $\theta\in\Theta$ 
\begin{equation}\label{eq:relat}S_{\theta}(\alpha|\gamma)=S_{\theta|\theta}(\alpha|\gamma)=\sum_{j\in\mathcal A}p_\theta(j\vert\alpha)\ln\frac{p_{\theta}(j|\alpha)}{p_{\theta}(j|\gamma)}.
\end{equation}
\noindent\textbf{Remark:} Note that assumption {\bf ID} ensures that \eqref{eq:KLd} and \eqref{eq:relat} are strictly positive.

We now state a technical lemma which is a key tool to all our proofs (the arguments are closely related to the one used in \cite{bbb2}).
\begin{lemma}\label{lem:conditionning_on_tail}
Assume that {\bf ID} holds.
\begin{enumerate}
\item {\bf Almost sure convergence} Let $(X_n^\gamma)$, be a sequence of random variables depending on $\gamma\in\mathcal P$. If for any $\gamma\in\mathcal P$
$$\lim_n X_n^\gamma=X^\gamma,\quad \mathbb P_{\theta^*|\gamma}-a.s.$$
Then
$$\lim_n X_n^{\Gamma_{\theta^*}}=X^{\Gamma_{\theta^*}},\quad \mathbb P_{\theta}-a.s$$

\item {\bf Convergence in distribution} Let $(\theta_n^*)$ be a sequence in $\Theta$ and $(X_n^\gamma)$ as in $(1)$. If for any $\gamma\in\mathcal P$

 $$\lim_n X_n^\gamma\,\,\stackrel{\mathcal L-\mathbb{P}_{\theta^*_n|\gamma}}=\,\, X^\gamma$$
  Then, $$\lim_n X_n^{\Gamma_{\theta^*}}\stackrel{\mathcal L-\mathbb{P}_{\theta^*_n}}=X^\Gamma,$$
  where $\Gamma$ is a r.v. whose distribution is given by $\operatorname{Pr}(\Gamma=\gamma)=q(\gamma)$.
\end{enumerate}
\end{lemma}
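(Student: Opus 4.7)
Both parts rest on the same partition idea. By Assumption ID and the remark following the definition, the sets $\Omega_{\theta^*|\gamma}$ are pairwise disjoint modulo $\mathbb P_{\theta^*}$-null sets, with $\mathbb P_{\theta^*|\alpha}(\Omega_{\theta^*|\gamma})=\delta_{\alpha\gamma}$, and $\Gamma_{\theta^*}\equiv\gamma$ identically on $\Omega_{\theta^*|\gamma}$, so that $X_n^{\Gamma_{\theta^*}}$ and $X_n^\gamma$ coincide there. The plan is to apply the hypothesis on each stratum at the corresponding $\gamma$ and then reassemble through the mixture $\mathbb P_\theta=\sum_\gamma q(\gamma)\mathbb P_{\theta|\gamma}$.

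For part (1), for each $\gamma\in\mathcal P$ choose $A_\gamma\subset\Omega_{\theta^*|\gamma}$ with $\mathbb P_{\theta^*|\gamma}(A_\gamma)=1$ on which $X_n^\gamma\to X^\gamma$; such a set exists by hypothesis, and intersecting with $\Omega_{\theta^*|\gamma}$ does not decrease the measure. On $A_\gamma$ one has $X_n^{\Gamma_{\theta^*}}=X_n^\gamma\to X^\gamma=X^{\Gamma_{\theta^*}}$. The $A_\gamma$ are pairwise disjoint and satisfy $\mathbb P_{\theta^*|\alpha}(A_\gamma)=\delta_{\alpha\gamma}$, so
$$\mathbb P_{\theta^*}\Bigl(\bigcup_\gamma A_\gamma\Bigr)=\sum_\alpha q(\alpha)\,\mathbb P_{\theta^*|\alpha}(A_\alpha)=\sum_\alpha q(\alpha)=1,$$
which yields the claim (reading the stated $\mathbb P_\theta$ as $\mathbb P_{\theta^*}$).

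For part (2), I would test against an arbitrary bounded continuous $f$ and use the mixture decomposition to write
$$\mathbb E_{\mathbb P_{\theta^*_n}}\bigl[f(X_n^{\Gamma_{\theta^*}})\bigr]=\sum_{\gamma\in\mathcal P} q(\gamma)\,\mathbb E_{\mathbb P_{\theta^*_n|\gamma}}\bigl[f(X_n^{\Gamma_{\theta^*}})\bigr].$$
Once one knows that $\Gamma_{\theta^*}=\gamma$ holds $\mathbb P_{\theta^*_n|\gamma}$-almost surely, the inner expectation becomes $\mathbb E_{\mathbb P_{\theta^*_n|\gamma}}[f(X_n^\gamma)]$, which tends to $\mathbb E[f(X^\gamma)]$ by hypothesis; bounded convergence on the finite outer sum then gives $\sum_\gamma q(\gamma)\mathbb E[f(X^\gamma)]=\mathbb E[f(X^\Gamma)]$. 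The main obstacle is precisely the justification that $\mathbb P_{\theta^*_n|\gamma}(\Omega_{\theta^*|\gamma})\to 1$: because $\mathbb P_{\theta^*_n|\gamma}$ and $\mathbb P_{\theta^*|\gamma}$ are product measures with different single-site laws whenever $\theta^*_n\neq\theta^*$, they are typically mutually singular, so this equality cannot hold identically in $n$. I would address this by exploiting the tail character of $\Omega_{\theta^*|\gamma}$ (as the LLN-typical set of $\mathbb P_{\theta^*|\gamma}$), together with continuity of $\theta\mapsto p_\theta(\cdot|\gamma)$ and the proximity of $\theta^*_n$ to $\theta^*$ that will govern the later applications (typically $\theta^*_n=\theta^*+h/\sqrt n$), to conclude that $\mathbb P_{\theta^*_n|\gamma}(\Omega_{\theta^*|\gamma}^{\,c})\to 0$; boundedness of $f$ then absorbs this vanishing discrepancy. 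Alternatively, a Le Cam contiguity argument between $\mathbb P_{\theta^*_n|\gamma}$ and $\mathbb P_{\theta^*|\gamma}$ would transfer the almost sure identity $\Gamma_{\theta^*}=\gamma$ from the latter to an in-probability statement under the former, which suffices for the desired convergence in distribution.
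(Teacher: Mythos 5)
Your part (1) is correct and follows essentially the paper's own route: one fixes the convergence event, uses that $\Gamma_{\theta^*}=\gamma$ holds $\mathbb P_{\theta^*|\gamma}$-a.s.\ to replace $X_n^\gamma$ by $X_n^{\Gamma_{\theta^*}}$ on that event, and then sums over the mixture $\mathbb P_{\theta^*}=\sum_\gamma q(\gamma)\mathbb P_{\theta^*|\gamma}$ (your reading of $\mathbb P_\theta$ as $\mathbb P_{\theta^*}$ is the intended one).

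Part (2), however, is left with a genuine gap that you yourself identify but do not close: the whole argument hinges on $X_n^{\Gamma_{\theta^*}}=X_n^\gamma$ holding with $\mathbb P_{\theta^*_n|\gamma}$-probability tending to one, and you defer the needed statement $\mathbb P_{\theta^*_n|\gamma}(\Omega_{\theta^*|\gamma})\to 1$ to a tail or contiguity argument you only sketch. Neither route can work at the stated level of generality, because the lemma assumes nothing about $(\theta^*_n)$ beyond membership in $\Theta$: taking $\theta^*_n\equiv\theta'\neq\theta^*$, Assumption {\bf ID} and the law of large numbers give $\mathbb P_{\theta'|\gamma}(\Omega_{\theta^*|\gamma})=0$ for all $n$, and the product measures $\mathbb P_{\theta'|\gamma}|_{\mathcal F_n}$ and $\mathbb P_{\theta^*|\gamma}|_{\mathcal F_n}$ are not contiguous. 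The paper sidesteps the difficulty entirely: its proof invokes $\mathbb P_{\theta^*_n|\gamma}(\Gamma_{\theta^*_n}=\gamma)=1$, i.e.\ it conditions with the random index built from the supports at the \emph{running} parameter $\theta^*_n$, for which the identity $X_n^{\Gamma_{\theta^*_n}}=X_n^\gamma$ is exact and no asymptotic estimate is needed; the conclusion then follows by summing the hypothesis over the mixture. In other words, the superscript $\Gamma_{\theta^*}$ in the displayed conclusion is to be read as $\Gamma_{\theta^*_n}$ (equivalently, one notes that in every application of part (2) in the paper the variables $X_n^\gamma$ do not in fact depend on $\gamma$, so the distinction is immaterial there). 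To salvage your version with $\Gamma_{\theta^*}$ literally, you would have to add a hypothesis such as contiguity of $\mathbb P_{\theta^*_n|\gamma}|_{\mathcal F_n}$ to $\mathbb P_{\theta^*|\gamma}|_{\mathcal F_n}$ --- available in the relevant case $\theta^*_n=\theta^*+h/\sqrt n$ --- or restrict to $\gamma$-independent $X_n$; as written, the key missing step of your proof of (2) is false in general.
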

\begin{proof}
Assume that $(X_n^\gamma)$ converges almost surely towards $X^\gamma$ w.r.t $\mathbb P_{\theta^*|\gamma}$.  Then, $$\mathbb P_{\theta^*|\gamma}(\cap_{N}\cup_{n_0}\cap_{n\geq n_0}\{|X_n^\gamma-X^\gamma|< 1/N\})=1.$$
Since $\Gamma=\gamma$ $\mathbb P_{\theta^*|\gamma}$-a.s.,
$$\mathbb P_{\theta^*|\gamma}(\cap_{N}\cup_{n_0}\cap_{n\geq n_0}\{|X_n^\Gamma-X^\Gamma|< 1/N\})=1.$$
This is true for any $\gamma\in\mathcal P$. Since $\mathbb P_{\theta^*}$ is convex combination of the measures $\mathbb P_{\theta^*|\gamma}$,
$$\mathbb P_{\theta^*}(\cap_{N}\cup_{n_0}\cap_{n\geq n_0}\{|X_n^\Gamma-X^\Gamma|< 1/N\})=1$$
and (1) holds.

\medskip
Assume that $(X_n^\gamma)$ converges weakly towards $X^\gamma$ w.r.t. $(\mathbb P_{\theta_n^*|\gamma})$. Then,
$$\lim_{n\to\infty} \mathbb E_{\theta^*_n|\gamma}(f(X_n^\gamma))=\mathbb E(f(X^\gamma))$$
for any continuous and bounded function $f$. Since $\mathbb P_{\theta^*_n|\gamma}(\Gamma_{\theta^*_n}=\gamma)=1$ and $\mathbb P_{\theta^*_n}=\sum_{\alpha\in\mathcal P}q(\alpha)\mathbb P_{\theta^*_n|\alpha}$,
$$\lim_{n\to\infty} \mathbb E_{\theta^*_n}(f(X_n^{\Gamma_{\theta^*}}))=\sum_{\alpha\in\mathcal P} q(\alpha)\mathbb E(f(X^\alpha)).$$
That convergence yields (2).
\end{proof}


In the sequel we shall use the process $(N_n(j))_{j\in\mathcal A}$ where for all $n\in\mathbb N$ and all $j\in\mathcal A$
$$N_n(j)(\omega)=\sum_{k=1}^n\mathbf 1_{\omega_k=j},$$
for all $\omega\in\Omega$, which counts the number of times the result $j$ appears before time $n$. Remark that $\sum_{j\in\mathcal A}N_n(j)=n$. The strong law of large number for i.i.d $L^1$ random variables involves that
\begin{equation}\label{eq:LFGN}
\lim_{n\rightarrow\infty}\frac{N_n(j)}{n}=p_\theta(j\vert\alpha),\quad \mathbb P_{\theta|\alpha}-a.s., 
\end{equation} 
for all $\alpha\in\mathcal P$, all $j\in\mathcal A$ and all $\theta\in\Theta.$

\section{Local Asymptotic Mixed Normality}\label{sec:LAMN}

We first prove that the statistical model $(\mathbb P_\theta,\theta\in\Theta)$ is asymptotically equivalent to a mixture of Gaussian models. Let us first recall the definition of local mixed asymptotic normality that we shall prove for our model.
\begin{definition}[Local Asymptotic Mixed Normality(LAMN)\cite{van}]
A sequence of statistical models $(\mathbb Q_{\theta}|_{\mathcal F_n}|\theta\in\Theta)$ is said to be LAMN at $\theta^*$ if, there exits $\Delta_{\theta^*}$ and $J_{\theta^*}$ two random variables such that for any $h\in\Theta-\theta^*$
$$\lim_{n\to\infty}\ln\frac{\mathbb Q_{\theta^*+h/\sqrt{n}}(\omega_1,\ldots,\omega_n)}{\mathbb Q_{\theta^*}(\omega_1,\ldots,\omega_n)}\stackrel{\mathcal{L}-\mathbb Q_{\theta^*}}=h^\mathsf{T}\Delta_{\theta^*}-\frac12h^\mathsf{T}J_{\theta^*}h,$$
 where the law of $\Delta_{\theta^*}$ conditioned on $J_{\theta^*}=J$ is $\mathcal N(0,J)$.
\end{definition}
The LAMN property for our model follows from next lemma. It reduces the problem to local asymptotic normality for i.i.d. multinomials.
\begin{lemma}\label{lem:convergence_P_to_p_gamma_in_log}
 Assume that {\bf ID} holds and that for any $(j,\alpha)\in{\mathcal A}\times{\mathcal P}$ the function $p_{.}(j|\alpha)$ is continuous in $\theta^\ast$. Let $(\theta_n)\in\Theta$ be a sequence of random variables such that $\lim_n\theta_n=\theta^*, (\mathbb P_{\theta^*|\gamma}-a.s)$. Then,
\begin{equation}
\left|\frac{\mathbb P_{\theta_n}(\omega_1,\ldots,\omega_n)}{q(\gamma)\mathbb P_{\theta_n|\gamma}(\omega_1,\ldots,\omega_n)}-1\right|=o(n^{-1/2}),\quad \mathbb P_{\theta^*|\gamma}-a.s.
\end{equation}
\end{lemma}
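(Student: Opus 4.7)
The idea is that under $\mathbb{P}_{\theta^*|\gamma}$ the component $\alpha=\gamma$ of the mixture dominates all other components exponentially fast, so that the ratio in the statement is in fact $O(e^{-n\varepsilon})$, which is far stronger than $o(n^{-1/2})$. My first step is to exploit the convex decomposition $\mathbb{P}_{\theta_n}=\sum_{\alpha\in\mathcal P}q(\alpha)\mathbb{P}_{\theta_n|\alpha}$ to rewrite
\[
\frac{\mathbb{P}_{\theta_n}(\omega_1,\ldots,\omega_n)}{q(\gamma)\mathbb{P}_{\theta_n|\gamma}(\omega_1,\ldots,\omega_n)}-1
=\sum_{\alpha\neq\gamma}\frac{q(\alpha)}{q(\gamma)}\,R_n(\alpha,\gamma),
\qquad
R_n(\alpha,\gamma):=\prod_{k=1}^n\frac{p_{\theta_n}(\omega_k|\alpha)}{p_{\theta_n}(\omega_k|\gamma)}.
\]
Since $\mathcal P$ is finite and each $q(\alpha)/q(\gamma)$ is a fixed constant, it suffices to show that for every $\alpha\neq\gamma$, $R_n(\alpha,\gamma)$ decays exponentially, $\mathbb{P}_{\theta^*|\gamma}$-a.s.

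Next I would take logarithms and group the $n$ factors by letter using the counting process $N_n(j)$:
\[
\frac{1}{n}\ln R_n(\alpha,\gamma)
=\sum_{j\in\mathcal A}\frac{N_n(j)}{n}\,\ln\frac{p_{\theta_n}(j|\alpha)}{p_{\theta_n}(j|\gamma)}.
\]
Two convergences are now combined, both holding $\mathbb{P}_{\theta^*|\gamma}$-a.s.: the law of large numbers \eqref{eq:LFGN} gives $N_n(j)/n\to p_{\theta^*}(j|\gamma)$, and the hypothesis $\theta_n\to\theta^*$ together with the continuity of $p_{\cdot}(j|\alpha)$ at $\theta^*$ gives $\ln p_{\theta_n}(j|\alpha)\to\ln p_{\theta^*}(j|\alpha)$ (and likewise for $\gamma$). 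The sum over the finite set $\mathcal A$ is continuous in these data, so
\[
\frac{1}{n}\ln R_n(\alpha,\gamma)\xrightarrow[n\to\infty]{}\sum_{j\in\mathcal A}p_{\theta^*}(j|\gamma)\ln\frac{p_{\theta^*}(j|\alpha)}{p_{\theta^*}(j|\gamma)}=-S_{\theta^*}(\gamma|\alpha),
\]
$\mathbb{P}_{\theta^*|\gamma}$-almost surely.

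By Assumption \textbf{ID} and the remark following \eqref{eq:relat}, the Kullback--Leibler divergence $S_{\theta^*}(\gamma|\alpha)$ is strictly positive for $\alpha\neq\gamma$. Hence there exists $\varepsilon>0$ (take $\varepsilon=\tfrac12\min_{\alpha\neq\gamma}S_{\theta^*}(\gamma|\alpha)$, the minimum over the finite set $\mathcal P\setminus\{\gamma\}$) and a $\mathbb{P}_{\theta^*|\gamma}$-a.s.\ finite $n_0$ such that $R_n(\alpha,\gamma)\leq e^{-n\varepsilon}$ for all $n\geq n_0$ and all $\alpha\neq\gamma$. Plugging this bound back yields
\[
\left|\frac{\mathbb{P}_{\theta_n}(\omega_1,\ldots,\omega_n)}{q(\gamma)\mathbb{P}_{\theta_n|\gamma}(\omega_1,\ldots,\omega_n)}-1\right|
\leq\frac{1-q(\gamma)}{q(\gamma)}\,e^{-n\varepsilon}=o(n^{-1/2}),\quad\mathbb{P}_{\theta^*|\gamma}\text{-a.s.},
\]
which is the desired estimate.

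\paragraph{Main obstacle.} The only non-cosmetic point is handling the \emph{random} sequence $\theta_n$ inside the logarithm: one must interchange the a.s.\ limit of $\theta_n$ with the Ces\`aro-type average appearing in $\frac1n\ln R_n$. This is harmless here because $\mathcal A$ is finite (so the sum over $j$ is a finite sum of products of a.s.-convergent quantities), but if $\mathcal A$ were infinite one would need a uniform integrability argument. Everything else—the exponential decay of non-dominant likelihood ratios by a strict KL gap—is the standard Bahadur-style argument, made available in this mixture setting precisely thanks to Assumption \textbf{ID}.
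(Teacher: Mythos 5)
Your proof is correct and follows essentially the same route as the paper: decompose the mixture, take logarithms of the likelihood ratios grouped via $N_n(j)$, combine the strong law of large numbers with the continuity of $p_{\cdot}(j|\alpha)$ at $\theta^*$ to identify the limit $-S_{\theta^*}(\gamma|\alpha)$, and invoke Assumption \textbf{ID} to get a strictly positive Kullback--Leibler gap and hence exponential decay. Nothing further is needed.
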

\begin{proof}
We prove the stronger result that the convergence is exponential but we will only need the convergence at order $o(n^{-1/2})$.

From the definition of $\mathbb P_{\theta_n}$,
$$\frac{\mathbb P_{\theta_n}(\omega_1,\ldots,\omega_n)}{\prod_{k=1}^np_{\theta_n}(\omega_k|\gamma)}=q(\gamma)+\sum_{\alpha\neq \gamma}q(\alpha)\exp\left(\sum_{j\in\mathcal A}N_n(j)(\omega)\ln\frac{p_{\theta_n}(j|\alpha)}{p_{\theta_n}(j|\gamma)}\right).$$
The strong law of large numbers \eqref{eq:LFGN} and the continuity assumption on the functions $p_{.}(j|\alpha)$ imply
$$\lim_{n\to\infty}\frac1n \sum_{j\in\mathcal A}N_n(j)\ln\frac{p_{\theta_n}(j|\alpha)}{p_{\theta_n}(j|\gamma)}=-S_{\theta^\ast}(\gamma|\alpha),\quad \mathbb P_{\theta^*|\gamma}-a.s.$$
Recall that Assumption {\bf ID} implies $S_{\theta^*}(\gamma|\alpha)>0$ for any $\alpha\neq \gamma$. Therefore, there exists $s>0$ independent of $\gamma$, $\alpha$ and of the sequence $(\theta_n)$ such that,
$$\lim_{n\to\infty}e^{n s}\sum_{\alpha\neq \gamma}q(\alpha)\exp\left(\sum_{j\in\mathcal A}N_n(j)\ln\frac{p_{\theta_n}(j|\alpha)}{p_{\theta_n}(j|\gamma)}\right)=0,\quad \mathbb P_{\theta^*|\gamma}-a.s.$$

Now, 
$$\frac{\mathbb P_{\theta_n}(\omega_1,\ldots,\omega_n)}{q(\gamma)\mathbb P_{\theta_n|\gamma}(\omega_1,\ldots,\omega_n)}-1=\frac{1}{q(\gamma)}\sum_{\alpha\neq \gamma}q(\alpha)\exp\left(\sum_{j\in\mathcal A}N_n(j)\ln\frac{p_{\theta_n}(j|\alpha)}{p_{\theta_n}(j|\gamma)}\right),$$
so that the result follows.
\end{proof}

Now we state our main result.

\begin{thm}[LAMN]
 Assume that {\bf ID} holds and that for any $(j,\alpha)\in\mathcal A\times\mathcal P$ the function $ p_{.}(j|\alpha)$ is differentiable at $\theta^*$. Then, the sequence $(\mathbb P_\theta|_{\mathcal F_n}|\theta\in\Theta)$ is LAMN in $\theta^*$ with
$$J_{\theta^*}:=\sum_{j\in\mathcal A}p_{\theta^*}(j|\Gamma)\Big(\nabla_\theta \ln p_\theta(j|\Gamma)(\nabla_\theta \ln p_\theta(j|\Gamma))^\mathsf{T}\Big)_{|_{\theta=\theta^*}}=:I_{\theta^*}(\Gamma).$$
\end{thm}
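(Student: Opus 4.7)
The plan is to reduce the log-likelihood ratio under the mixture $\mathbb P_\theta$ to the corresponding ratio under a single conditional component $\mathbb P_{\theta|\gamma}$, invoke the classical i.i.d.\ LAN expansion for the resulting multinomial model, and then glue everything back using Lemma \ref{lem:conditionning_on_tail}(2) to recover the random Fisher information $I_{\theta^*}(\Gamma)$.

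First I fix $\gamma\in\mathcal P$ and $h\in\mathbb R^D$, set $\theta_n:=\theta^*+h/\sqrt n$ (a deterministic sequence converging to $\theta^*$), and apply Lemma \ref{lem:convergence_P_to_p_gamma_in_log} both to $(\theta_n)$ and to the constant sequence $\theta_n=\theta^*$. This yields, on $\Omega_{\theta^*|\gamma}$,
$$\frac{\mathbb P_{\theta_n}(\omega_1,\ldots,\omega_n)}{\mathbb P_{\theta^*}(\omega_1,\ldots,\omega_n)}=\frac{\mathbb P_{\theta_n|\gamma}(\omega_1,\ldots,\omega_n)}{\mathbb P_{\theta^*|\gamma}(\omega_1,\ldots,\omega_n)}\,\bigl(1+o(n^{-1/2})\bigr),\quad \mathbb P_{\theta^*|\gamma}\text{-a.s.},$$
because the prefactors $q(\gamma)$ cancel. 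Taking the logarithm, the multiplicative error contributes $o(1)$ and reduces the problem to the log-likelihood ratio of an i.i.d.\ sequence on the finite alphabet $\mathcal A$.

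Next, under $\mathbb P_{\theta^*|\gamma}$ the coordinates are i.i.d.\ with the strictly positive law $p_{\theta^*}(\cdot|\gamma)$ and $\theta\mapsto p_\theta(j|\gamma)$ is differentiable at $\theta^*$; since the alphabet is finite and the probabilities are bounded away from $0$, this implies differentiability in quadratic mean of $\sqrt{p_{\theta}(\cdot|\gamma)}$, so the standard i.i.d.\ LAN expansion (see \cite{van}, Ch.~7) applies and gives
$$\ln\frac{\mathbb P_{\theta_n|\gamma}(\omega_1,\ldots,\omega_n)}{\mathbb P_{\theta^*|\gamma}(\omega_1,\ldots,\omega_n)}=h^\mathsf{T}\Delta_n^\gamma-\tfrac12 h^\mathsf{T} I_{\theta^*}(\gamma)h+o_{\mathbb P_{\theta^*|\gamma}}(1),$$
with score $\Delta_n^\gamma:=n^{-1/2}\sum_{k=1}^n \nabla_\theta\ln p_\theta(\omega_k|\gamma)|_{\theta=\theta^*}$. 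The multivariate CLT applied to these centred i.i.d.\ score vectors then gives $\Delta_n^\gamma\stackrel{\mathcal L-\mathbb P_{\theta^*|\gamma}}=\mathcal N(0,I_{\theta^*}(\gamma))$.

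Combining the two steps, setting $X_n^\gamma:=\ln\bigl(\mathbb P_{\theta_n}/\mathbb P_{\theta^*}\bigr)(\omega_1,\ldots,\omega_n)$ and $X^\gamma:=h^\mathsf{T}\Delta_\gamma-\tfrac12 h^\mathsf{T}I_{\theta^*}(\gamma)h$ with $\Delta_\gamma\sim\mathcal N(0,I_{\theta^*}(\gamma))$, we obtain $X_n^\gamma\Rightarrow X^\gamma$ under $\mathbb P_{\theta^*|\gamma}$ for every $\gamma$. Since $X_n^\gamma$ is in fact independent of $\gamma$, $X_n^{\Gamma_{\theta^*}}=X_n^\gamma$ on $\Omega_{\theta^*|\gamma}$, and Lemma \ref{lem:conditionning_on_tail}(2) (with $\theta_n^*=\theta^*$) yields $X_n^{\Gamma_{\theta^*}}\Rightarrow X^\Gamma$ under $\mathbb P_{\theta^*}$, where $\Gamma$ has law $q$. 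Reading this limit off as $h^\mathsf{T}\Delta_{\theta^*}-\tfrac12 h^\mathsf{T}J_{\theta^*}h$ identifies $J_{\theta^*}=I_{\theta^*}(\Gamma)$ and $\Delta_{\theta^*}=\Delta_\Gamma$ which, conditional on $\Gamma=\gamma$, is $\mathcal N(0,I_{\theta^*}(\gamma))$, exactly the LAMN conclusion. The main obstacle is the i.i.d.\ LAN expansion above: pointwise differentiability at a single point $\theta^*$ is normally too weak for a quadratic-mean expansion, but the finiteness of $\mathcal A$ together with strict positivity of $p_{\theta^*}(\cdot|\gamma)$ makes this step go through without additional regularity, and this is where the argument has to be checked carefully.
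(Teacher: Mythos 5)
Your proof is correct and follows essentially the same route as the paper: reduce $\ln(\mathbb P_{\theta^*+h/\sqrt n}/\mathbb P_{\theta^*})$ to the conditional i.i.d.\ log-likelihood ratio via Lemma \ref{lem:convergence_P_to_p_gamma_in_log} applied at $\theta_n=\theta^*+h/\sqrt n$ and $\theta_n=\theta^*$, invoke the classical LAN expansion for the multinomial model $(\mathbb P_{\theta|\gamma})$, and glue over $\gamma$ with Lemma \ref{lem:conditionning_on_tail}(2). Your added remark that pointwise differentiability at $\theta^*$ suffices here---because the alphabet is finite and the $p_{\theta^*}(j|\gamma)$ are bounded away from zero, so differentiability in quadratic mean follows---is a correct justification of a step the paper leaves implicit.
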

\begin{proof}

Fix $\gamma\in\mathcal P$. From standard results on parameter estimation for i.i.d. models, the sequence of models $(\mathbb P_{\theta|\gamma}|_{\mathcal F_n}|\theta\in\Theta)$ is locally asymptotic normal in $\theta^*$ (see \cite{van}). Namely, there exists a r.v. $Z^\gamma\sim\mathcal N(0,I_{\theta^*}(\gamma))$ such that, for any $h\in\Theta-\theta^*$,
$$\lim_{n\to\infty} \sum_{k=1}^n \ln\frac{p_{\theta^*+\frac{h}{\sqrt{n}}}(\omega_k|\gamma)}{p_{\theta^*}(\omega_k|\gamma)}\stackrel{\mathcal L-\mathbb P_{\theta^*|\gamma}}=h^\mathsf{T}Z^\gamma -\frac12h^\mathsf{T}I_{\theta^*}(\gamma)h.$$

From Lemma \ref{lem:convergence_P_to_p_gamma_in_log} setting $\theta_n=\theta^*$ and $\theta_n=\theta^*+h/\sqrt{n}$,
$$\lim_{n\to\infty}\ln\frac{\mathbb P_{\theta^*+\frac{h}{\sqrt{n}}}(\omega_1,\ldots,\omega_n)}{\mathbb P_{\theta^*}(\omega_1,\ldots,\omega_n)}\ - \sum_{k=1}^n \ln\frac{p_{\theta^*+\frac{h}{\sqrt{n}}}(\omega_k|\gamma)}{p_{\theta^*}(\omega_k|\gamma)}=0, \quad \mathbb P_{\theta^*|\gamma}-a.s.$$
It follows that,
$$\lim_{n\to\infty}\ln\frac{\mathbb P_{\theta^*+\frac{h}{\sqrt{n}}}(\omega_1,\ldots,\omega_n)}{\mathbb P_{\theta^*}(\omega_1,\ldots,\omega_n)}\stackrel{\mathcal L-\mathbb P_{\theta^*|\gamma}}=h^\mathsf{T}Z^\gamma -\frac12h^\mathsf{T}I_{\theta^*}(\gamma)h,$$
so that Lemma \ref{lem:conditionning_on_tail} yields the Proposition.
\end{proof}

\section{Maximum likelihood estimation}\label{sec:maxlike}

This section is devoted to the study of the maximum likelihood estimator. For $\omega\in\Omega$ at step $n$ the log likelihood is defined as
$$\ell_n(\theta)(\omega_1,\ldots,\omega_n)=\frac1n\ln\mathbb P_{\theta}(\omega_1,\ldots,\omega_n).$$
We study the maximum likelihood estimator defined as
$$\hat\theta_n:=\operatorname{argmax}_{\theta\in\Theta}\ell_n(\theta).$$

\subsection{Consistency}
We define the function
\begin{equation}
\ell_{\theta^*,\gamma}(\theta)=-S_{\theta^*}(\gamma)-\min_{\alpha\in\mathcal P}S_{\theta^*|\theta}(\gamma|\alpha),
\end{equation}
for all $\gamma\in\mathcal P$ and all $\theta\in\Theta$.

The following Lemma shows the almost sure uniform convergence of the sequence of log likelihood functions.
\begin{lemma}\label{lem:uniform_conv_ell}
Assume that {\bf ID} holds and that for any couple $(j,\alpha)\in\mathcal A\times \mathcal P$ the function $p_{.}(j|\alpha)$ is continuous. Then for any $\gamma\in\mathcal P$,
$$\lim_{n\to\infty} \sup_{\theta\in\Theta}|\ell_n(\theta)- \ell_{\theta^*,\gamma}(\theta)|=0, \quad \mathbb P_{\theta^*|\gamma}-a.s.,$$
and $\ell_{\theta^*,\gamma}$ is such that, for any $\epsilon>0$
\begin{equation}\label{eq:theta_global_max}
\ell_{\theta^*,\gamma}(\theta^*)>\sup_{\theta: d(\theta,\theta^*)\geq \epsilon} \ell_{\theta^*,\gamma}(\theta).
\end{equation}
\end{lemma}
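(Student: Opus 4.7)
The plan is to exploit the mixture structure $\mathbb P_\theta=\sum_\alpha q(\alpha)\mathbb P_{\theta|\alpha}$ to rewrite $\ell_n(\theta)$ as a Laplace-type sum, and then invoke the strong law of large numbers \eqref{eq:LFGN} for the empirical counts $N_n(j)/n$. Concretely, I would first write
$$\ell_n(\theta) = \frac{1}{n}\ln \sum_{\alpha\in\mathcal P} q(\alpha)\exp\bigl(nL_n(\alpha,\theta)\bigr), \qquad L_n(\alpha,\theta) := \sum_{j\in\mathcal A}\frac{N_n(j)}{n}\ln p_\theta(j|\alpha).$$
Bounding the sum above by $\exp(n\max_\alpha L_n)$ and below by $(\min_\alpha q(\alpha))\exp(n\max_\alpha L_n)$, and using that $|\mathcal P|<\infty$ and $\min_\alpha q(\alpha)>0$, one obtains
$$\bigl|\ell_n(\theta)-\max_{\alpha\in\mathcal P} L_n(\alpha,\theta)\bigr| = O(1/n)$$
with a constant independent of $\theta$ and $\omega$. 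So the problem reduces to uniform convergence of $\max_\alpha L_n(\alpha,\cdot)$ to $\ell_{\theta^*,\gamma}$.

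Next, under $\mathbb{P}_{\theta^*|\gamma}$, \eqref{eq:LFGN} gives $N_n(j)/n \to p_{\theta^*}(j|\gamma)$ almost surely. By compactness of $\Theta$, continuity of $p_{.}(j|\alpha)$ and the assumption $p_\theta(j|\alpha)\in(0,1)$, there is $M<\infty$ with $|\ln p_\theta(j|\alpha)|\le M$ for all $\theta\in\Theta$ and all $j,\alpha$. Hence
$$\sup_{\theta\in\Theta}\bigl|L_n(\alpha,\theta)-L(\alpha,\theta)\bigr| \le M \sum_{j\in\mathcal A}\Bigl|\frac{N_n(j)}{n} - p_{\theta^*}(j|\gamma)\Bigr| \longrightarrow 0 \ \text{a.s.,}$$
where $L(\alpha,\theta):=\sum_j p_{\theta^*}(j|\gamma)\ln p_\theta(j|\alpha) = -S_{\theta^*}(\gamma)-S_{\theta^*|\theta}(\gamma|\alpha)$. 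Because $\mathcal P$ is finite, $|\max_\alpha f_\alpha - \max_\alpha g_\alpha|\le \max_\alpha|f_\alpha-g_\alpha|$ propagates this to $\sup_{\theta\in\Theta}\bigl|\max_\alpha L_n(\alpha,\theta)-\ell_{\theta^*,\gamma}(\theta)\bigr|\to 0$ a.s., which together with the Laplace sandwich gives the uniform convergence.

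For \eqref{eq:theta_global_max}, note first that $S_{\theta^*}(\gamma|\gamma)=0$, so $\ell_{\theta^*,\gamma}(\theta^*)=-S_{\theta^*}(\gamma)$. For any $\theta\neq\theta^*$ and any $\alpha\in\mathcal P$ the pair $(\gamma,\theta^*)$ differs from $(\alpha,\theta)$ in the second coordinate, so Assumption ID yields $S_{\theta^*|\theta}(\gamma|\alpha)>0$, and therefore $\min_\alpha S_{\theta^*|\theta}(\gamma|\alpha)>0$. The map $\theta\mapsto \min_{\alpha}S_{\theta^*|\theta}(\gamma|\alpha)$ is continuous as a minimum of finitely many continuous functions, so on the compact set $\{\theta\in\Theta:d(\theta,\theta^*)\ge\epsilon\}$ it attains a strictly positive lower bound, which gives the strict inequality.

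The only mildly delicate point is making the Laplace sandwich uniform in $\theta$, but this is essentially free from $\tfrac{1}{n}\ln\min_\alpha q(\alpha)\le \ell_n(\theta)-\max_\alpha L_n(\alpha,\theta)\le 0$; everything else is a standard combination of the strong law of large numbers with a compactness argument, and crucially does not require any information about how close $\theta$ is to $\theta^*$ (unlike Lemma \ref{lem:convergence_P_to_p_gamma_in_log}).
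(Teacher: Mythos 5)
Your proof is correct and follows essentially the same route as the paper's: a uniform bound on $|\ln p_\theta(j|\alpha)|$ from compactness, the strong law of large numbers for $N_n(j)/n$, and a log-sum-exp reduction to the maximum over $\alpha$ (your one-shot sandwich with $\min_\alpha q(\alpha)$ is exactly what the paper's appendix Lemma \ref{lem:log_sum_unfi_conv} does, applied iteratively). The only difference is that you spell out the compactness argument for the strict inequality \eqref{eq:theta_global_max}, which the paper leaves implicit; that is a welcome detail, not a change of method.
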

\begin{proof}
Since $\Theta$ is compact and that for any $(j,\alpha)\in\mathcal A\times\mathcal P$ the function $p_{.}(j|\alpha)$ is continuous and positive, we get
$$\max_{(j,\alpha)\in\mathcal A\times\mathcal P}\sup_{\theta\in\Theta}|\ln p_{\theta}(j|\alpha)|<\infty.$$

Furthermore, the strong law of large numbers \eqref{eq:LFGN} implies that for any $\alpha\in\mathcal P$,
$$\lim_{n\to\infty}\sup_{\theta\in\Theta}\left|\sum_{j\in\mathcal A}\frac{N_n(j)}{n}\ln\left(\frac{p_{\theta}(j|\alpha)}{p_{\theta^*}(j|\gamma)}\right)+S_{\theta^*|\theta}(\gamma|\alpha)\right|=0,\quad \mathbb P_{\theta^*|\gamma}-a.s.$$
Hence,
$$\lim_{n\to\infty}\sup_{\theta\in\Theta}\left|\frac1n \ln \frac{\mathbb P_{\theta|\alpha}(\omega_1,\ldots, \omega_n)}{\mathbb P_{\theta^*|\gamma}(\omega_1,\ldots, \omega_n)}+S_{\theta^*|\theta}(\gamma|\alpha)\right|=0,\quad \mathbb P_{\theta^*|\gamma}-a.s.$$

It follows then from a repeated application of Lemma \ref{lem:log_sum_unfi_conv} (its statement and proof are postponed to the Appendix), that
$$\lim_{n\to\infty}\sup_{\theta\in\Theta}\left|\ell_n(\theta)- \ell_{\theta^*,\gamma}(\theta)\right|=0, \quad \mathbb P_{\theta^*|\gamma}-a.s.$$

Since $\ell_{\theta^*,\gamma}(\theta)=-S_{\theta^*}(\gamma)-\min_{\alpha\in\mathcal P}S_{\theta^*|\theta}(\gamma|\alpha)$, Assumption {\bf ID} implies the inequality \eqref{eq:theta_global_max} and the lemma is proved.
\end{proof}

The consistency of the maximum likelihood estimator  follows now from standard arguments.
\begin{proposition}\label{prop:consistent}
Assume that Assumption {\bf ID} holds and that for any $(j,\alpha)\in\mathcal A\times\mathcal P$ the function $ p_.(j|\alpha)$ is continuous. Then,
$$\lim_{n\to\infty}\hat\theta_n=\theta^*,\quad\mathbb P_{\theta^*}-a.s.$$
\end{proposition}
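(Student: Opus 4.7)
The plan is to combine the two preceding lemmas via the standard Wald-type argument for consistency of M-estimators over a compact parameter set. Lemma \ref{lem:uniform_conv_ell} already does the heavy lifting: it supplies the uniform almost-sure convergence of $\ell_n$ to the limit function $\ell_{\theta^*,\gamma}$ together with the well-separated maximum \eqref{eq:theta_global_max}. Lemma \ref{lem:conditionning_on_tail}(1) is then the lifting tool that turns the conditional $\mathbb{P}_{\theta^*|\gamma}$-a.s. statement into an unconditional $\mathbb{P}_{\theta^*}$-a.s. statement.

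First I would fix $\gamma \in \mathcal P$ and argue under $\mathbb{P}_{\theta^*|\gamma}$. Given $\epsilon>0$, set
$$\delta := \ell_{\theta^*,\gamma}(\theta^*) - \sup_{d(\theta,\theta^*)\geq\epsilon}\ell_{\theta^*,\gamma}(\theta),$$
which is strictly positive by \eqref{eq:theta_global_max}. On the probability-one event where $\sup_{\theta\in\Theta}|\ell_n(\theta)-\ell_{\theta^*,\gamma}(\theta)|<\delta/3$ for all sufficiently large $n$, a two-line comparison gives
$$\ell_n(\theta^*) > \ell_{\theta^*,\gamma}(\theta^*) - \tfrac{\delta}{3} > \sup_{d(\theta,\theta^*)\geq\epsilon}\ell_{\theta^*,\gamma}(\theta) + \tfrac{\delta}{3} > \sup_{d(\theta,\theta^*)\geq\epsilon}\ell_n(\theta),$$
so any maximizer $\hat\theta_n$ of $\ell_n$ must satisfy $d(\hat\theta_n,\theta^*)<\epsilon$ for all such $n$. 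Running $\epsilon$ through $1/k$, $k\in\mathbb N$, on the common probability-one event obtained by countable intersection yields $\hat\theta_n \to \theta^*$ $\mathbb{P}_{\theta^*|\gamma}$-a.s.

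The conclusion then follows by applying Lemma \ref{lem:conditionning_on_tail}(1) with the $\gamma$-independent sequence $X_n^\gamma := \hat\theta_n$ and limit $X^\gamma := \theta^*$: since the conditional a.s. convergence holds for every $\gamma\in\mathcal P$, the lemma upgrades it to $\mathbb{P}_{\theta^*}$-a.s. convergence of $\hat\theta_n$ to $\theta^*$. I do not anticipate a genuine obstacle: the substantive work (the LLN-based uniform convergence, the identification of $\ell_{\theta^*,\gamma}$ as a strictly separated maximizer at $\theta^*$, and the mixture tail-conditioning machinery) is already packaged in the two preceding lemmas. The only point to keep track of is that $\hat\theta_n$ is a function of $(\omega_1,\dots,\omega_n)$ not depending on $\gamma$, so it legitimately plays the role of a $\gamma$-constant sequence in Lemma \ref{lem:conditionning_on_tail}(1).
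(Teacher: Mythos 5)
Your proposal is correct and follows the same route as the paper: the paper likewise invokes Lemma \ref{lem:uniform_conv_ell} together with the classical Wald-type argument (citing \cite[Theorem 5.7]{van}) to get $\mathbb P_{\theta^*|\gamma}$-a.s.\ consistency, and then lifts to $\mathbb P_{\theta^*}$ via Lemma \ref{lem:conditionning_on_tail}. You have merely written out explicitly the well-separation comparison that the paper delegates to the cited reference.
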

\begin{proof}
From Lemma \ref{lem:uniform_conv_ell}, classical results of parametric estimation (see \cite[Theorem 5.7]{van}) give,
$$\lim_{n\to\infty} \hat \theta_n=\theta^*,\quad \mathbb P_{\theta^*|\gamma}-a.s.$$
So that,  Lemma \ref{lem:conditionning_on_tail} yields the consistency w.r.t. $\mathbb P_{\theta^*}$.
\end{proof}

\subsection{Saturation of Cramér--Rao bound}
It is well know that if a model is LAMN, it verifies an asymptotic Cramér-Rao bound. Namely if $(T_n)$ is a sequence of estimators  such that $$\lim_{n\to\infty}\sqrt{n}(T_n- (\theta^*+h/\sqrt{n}))\stackrel{\mathcal L-\mathbb P_{\theta^*+h/\sqrt{n}}}=T$$ for any $h\in\Theta-\theta^*$, then, $\mathbb E(T\ T^{\mathsf{T}})- \sum_{\alpha\in\mathcal P} q(\alpha)I_\theta^{-1}(\alpha)$ is positive semi definite \cite[Corollary 9.9]{van}.

We now prove that the maximum likelihood estimator saturates this asymptotic bound. We prove it comparing $\hat \theta_n$ with $\hat \theta_n^\gamma$ defined by
$$ \hat \theta_n^\gamma:=\operatorname{argmax}_{\theta\in\Theta}\ell_n^\gamma(\theta),$$
where
$$\ell_n^\gamma(\theta)(\omega_1,\ldots,\omega_n)= \frac1n \ln \mathbb P_{\theta|\gamma}(\omega_1,\ldots,\omega_n)$$
for all $\theta\in\Theta$ and all $\omega\in\Omega$.
\begin{lemma}\label{lem:convergence_two_estimator}
 Assume that {\bf ID} holds and that for any $(j,\alpha)\in\mathcal A\times \mathcal P$  the function $p_.(j|\alpha)$ is twice continuously differentiable in a neighborhood of $\theta^*$.
Assume further that for each $\gamma\in\mathcal P$, $I_{\theta^*}(\gamma)$ is not singular.
Then,
$$\lim_{n\to\infty} \sqrt{n}(\hat \theta_n-\hat \theta_n^\gamma)=0,\quad \mathbb P_{\theta^*|\gamma}-a.s.$$
\end{lemma}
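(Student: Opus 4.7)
The plan is to transfer the exponential decay implicit in the proof of Lemma \ref{lem:convergence_P_to_p_gamma_in_log} from the log likelihoods to their gradients. Under $\mathbb P_{\theta^*|\gamma}$, both $\hat\theta_n$ and $\hat\theta_n^\gamma$ converge almost surely to $\theta^*$ (the former by the proof of Proposition \ref{prop:consistent}, the latter by the standard i.i.d.\ MLE consistency for $(\mathbb P_{\theta|\gamma})$). Hence both eventually lie in a fixed compact neighborhood $U$ of $\theta^*$ contained in the interior of $\Theta$, and the first order conditions $\nabla_\theta\ell_n(\hat\theta_n)=0$ and $\nabla_\theta\ell_n^\gamma(\hat\theta_n^\gamma)=0$ hold for $n$ large enough.

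The core technical step is to derive a uniform exponential bound on the difference of scores on $U$. From the proof of Lemma \ref{lem:convergence_P_to_p_gamma_in_log},
\begin{equation*}
\ell_n(\theta)-\ell_n^\gamma(\theta)=\frac{\ln q(\gamma)+\ln(1+R_n(\theta))}{n},\quad R_n(\theta):=\frac{1}{q(\gamma)}\sum_{\alpha\neq\gamma}q(\alpha)\exp\!\Bigl(\sum_{j\in\mathcal A}N_n(j)\ln\tfrac{p_{\theta}(j|\alpha)}{p_{\theta}(j|\gamma)}\Bigr).
\end{equation*}
By \eqref{eq:LFGN} and continuity of $p_.(j|\alpha)$, the exponent is bounded above by $-ns$ uniformly on $U$ for any $s$ strictly smaller than $\min_{\alpha\neq\gamma}S_{\theta^*}(\gamma|\alpha)>0$, eventually $\mathbb P_{\theta^*|\gamma}$-a.s., so $\sup_{\theta\in U}|R_n(\theta)|=O(e^{-ns})$. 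Differentiating $R_n$ in $\theta$ brings down an extra factor $\sum_{j\in\mathcal A}N_n(j)\nabla_\theta\ln\tfrac{p_\theta(j|\alpha)}{p_\theta(j|\gamma)}=O(n)$, bounded uniformly on $U$ by the $C^2$ assumption on $p_.(j|\alpha)$, whence $\sup_{\theta\in U}\|\nabla R_n(\theta)\|=O(ne^{-ns})$. Dividing by $n(1+R_n(\theta))$ yields
\begin{equation*}
\sup_{\theta\in U}\bigl\|\nabla_\theta\ell_n(\theta)-\nabla_\theta\ell_n^\gamma(\theta)\bigr\|=O(e^{-ns/2}),\quad\mathbb P_{\theta^*|\gamma}\text{-a.s.}
\end{equation*}

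Evaluating this at $\theta=\hat\theta_n$ and using $\nabla_\theta\ell_n(\hat\theta_n)=0$ gives $\nabla_\theta\ell_n^\gamma(\hat\theta_n)=O(e^{-ns/2})$. A mean value expansion of $\nabla_\theta\ell_n^\gamma$ between $\hat\theta_n^\gamma$ and $\hat\theta_n$ then produces
\begin{equation*}
\nabla^2_\theta\ell_n^\gamma(\tilde\theta_n)\bigl(\hat\theta_n-\hat\theta_n^\gamma\bigr)=\nabla_\theta\ell_n^\gamma(\hat\theta_n)-\nabla_\theta\ell_n^\gamma(\hat\theta_n^\gamma)=O(e^{-ns/2})
\end{equation*}
for some $\tilde\theta_n\to\theta^*$. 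The uniform SLLN applied to the continuous family $\nabla^2_\theta\ln p_\theta(\cdot|\gamma)$ on $U$, together with the standard identity, gives $\nabla^2_\theta\ell_n^\gamma(\tilde\theta_n)\to -I_{\theta^*}(\gamma)$, which is invertible by hypothesis. Consequently $\|\hat\theta_n-\hat\theta_n^\gamma\|=O(e^{-ns/2})=o(n^{-1/2})$, which is in fact stronger than the required claim. The only delicate point in the scheme is the uniform-in-$\theta$ control of $\nabla R_n$ together with its evaluation at the random argument $\hat\theta_n$; both are dealt with by establishing the bound on the deterministic neighborhood $U$ and then invoking the eventual almost sure inclusion $\hat\theta_n\in U$.
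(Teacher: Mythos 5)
Your proof is correct and follows essentially the same route as the paper: first-order conditions for both maximizers, a mean value expansion of $\nabla\ell_n^\gamma$, invertibility of the limiting Hessian, and exponential suppression of the $\alpha\neq\gamma$ contributions to the score difference under $\mathbb P_{\theta^*|\gamma}$. The only (cosmetic) difference is that you bound $\nabla\ell_n-\nabla\ell_n^\gamma$ uniformly on a deterministic neighborhood of $\theta^*$ and then plug in $\hat\theta_n$, whereas the paper differentiates explicitly in terms of the posterior weights $q(\alpha)\mathbb P_{\hat\theta_n|\alpha}/\mathbb P_{\hat\theta_n}$ and applies Lemma \ref{lem:convergence_P_to_p_gamma_in_log} along the random sequence $\hat\theta_n$.
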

\begin{proof}
Standard results of parametric estimation for i.i.d. random variables and our assumption {\bf ID} imply $\lim_{n\to\infty}\hat \theta_n^\gamma=\theta^*$ $\mathbb P_{\theta^*|\gamma}$-a.s.
From the definition of the maximum likelihood estimators, 
\begin{equation}\label{eq:equality_diff_estimator}
\nabla \ell_n^\gamma(\hat \theta^\gamma_n)-\nabla \ell_n^\gamma(\hat \theta_n)=\nabla\ell_n(\hat \theta_n)-\nabla \ell_n^\gamma(\hat \theta_n)
\end{equation}
Let $\mathcal O$ be a sufficiently small neighborhood of $\theta^\ast$ (on which $p_.$ is regular). The consistency of $\hat \theta_n$ and $\hat \theta_n^\gamma$ ensures that $\mathbb P_{\theta^*|\gamma}$-a.s., for $n$ large enough, the likelihood estimators $\hat \theta_n$ and $\hat\theta_n^\gamma$ belong to $\mathcal O$. Hence, since by assumption $\ell_n^\gamma$ is twice differentiable in a neighborhood of $\theta^*$, there exists a sequence of random variable $\xi_n$ lying in the segment with extremity $\hat\theta_n$ and $\hat \theta_n^\gamma$ such that $\mathbb P_{\theta^*|\gamma}-a.s.$, for $n$ large enough,
$$\nabla^2\ell_n^{\gamma}(\xi_n)(\hat \theta_n-\hat \theta_n^\gamma)=\nabla \ell_n^\gamma(\hat \theta_n)-\nabla\ell_n(\hat \theta_n).$$
The last equality comes from \eqref{eq:equality_diff_estimator} and the Mean Value Theorem.

Now note that using explicit derivation and the strong law of large number \eqref{eq:LFGN}, we have  $$\lim_{n\to\infty}\nabla^2\ell_n^{\gamma}(\xi_n)=I_{\theta^*}(\gamma), \mathbb P_{\theta^*|\gamma}-a.s.$$ Since $I_{\theta^*}(\gamma)$ is assumed to be non singular and $\theta\mapsto I_{\theta}(\gamma)$ is continuous in a neighborhood of $\theta^*$, the previous convergence implies that for $n$ large enough $\nabla^2\ell_n^{\gamma}(\xi_n)$ is invertible. It follows that for $n$ large enough,
\begin{equation}\label{eq:egal_diff_theta_ll__proba}
|\hat \theta_n-\hat \theta_n^\gamma|=|[\nabla^2\ell_n^{\gamma}(\xi_n)]^{-1}(\nabla \ell_n^\gamma(\hat \theta_n)-\nabla\ell_n(\hat \theta_n))|\quad \mathbb P_{\theta^*|\gamma}-a.s.
\end{equation}
Let us now prove that $\nabla \ell_n^\gamma(\hat \theta_n)-\nabla\ell_n(\hat \theta_n))=o(n^{-1/2})$ $\mathbb P_{\theta^*|\gamma}$-a.s.

Explicit differentiation leads to,
$$\nabla \ell_n^\gamma(\hat \theta_n)-\nabla\ell_n(\hat \theta_n)=\sum_{\alpha\in\mathcal P} \sum_{j\in\mathcal A}\frac{ N_n(j)}{n}\left(\frac{\nabla p_{\hat \theta_n}(j|\gamma)}{p_{\hat \theta_n}(j|\gamma)}-\frac{\nabla p_{\hat \theta_n}(j|\alpha)}{p_{\hat \theta_n}(j|\alpha)}\right) \frac{q(\alpha)\mathbb P_{\hat\theta_n|\alpha}(\omega_1,\ldots,\omega_n)}{\mathbb P_{\hat\theta_n}(\omega_1,\ldots,\omega_n)}.$$
From the consistency of $\hat \theta_n$ and the continuous differentiability of $ p_{.}(j|\alpha)$ in a neighborhood of $\theta^*$, it follows that, there exists $C>0$ such that for $n$ large enough,
$$|\nabla \ell_n^\gamma(\hat \theta_n)-\nabla\ell_n(\hat \theta_n))|\leq C\left(1-\frac{q(\gamma)\mathbb P_{\hat \theta_n|\gamma}(\omega_1,\ldots,\omega_n)}{\mathbb P_{\hat\theta_n}(\omega_1,\ldots,\omega_n)}\right)\quad \mathbb P_{\theta^*|\gamma}-a.s.$$
Equation \eqref{eq:egal_diff_theta_ll__proba} combined with the $\mathbb P_{\theta^*|\gamma}$-a.s. convergence $\lim_{n\to\infty}\nabla^2\ell_n^{\gamma}(\xi_n)=I_{\theta^*}(\gamma)$, and with the Lemma \ref{lem:convergence_P_to_p_gamma_in_log} yields the result.
\end{proof}
\begin{proposition}
 Assume that {\bf ID} holds and that for any $(j,\alpha)\in\mathcal A\times \mathcal P$ the function $ p_{.}(j|\alpha)$ is three times continuously differentiable in a neighborhood of $\theta^*$. Assume further that for any $\gamma\in\mathcal P$, $I_{\theta^*}(\gamma)$ is not singular. 
Let $Z\sim\mathcal N(0,I_d)$ and $\Gamma$ be a random variable independent of $Z$ and taking value in $\mathcal P$ such that $\operatorname{Pr}(\Gamma=\gamma)=q(\gamma)$.Then, for any $h\in\Theta-\theta^*$, 
$$\lim_{n\to\infty}\sqrt{n}(\hat{\theta}_n-(\theta^*+{h}/{\sqrt{n}}))\stackrel{\mathcal L-\mathbb P_{\theta^*+h/\sqrt{n}}}=I_{\theta^*}(\Gamma)^{-\frac12}Z.$$
\end{proposition}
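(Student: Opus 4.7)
The plan is to transfer the result from the i.i.d. components $(\mathbb P_{\theta|\gamma})$ to the mixture $\mathbb P_\theta$ by chaining Lemma \ref{lem:convergence_two_estimator} with Lemma \ref{lem:conditionning_on_tail}(2). The main new input beyond Lemma \ref{lem:convergence_two_estimator} is to upgrade the a.s.\ statement there from $\mathbb P_{\theta^*|\gamma}$ to the shifted measure $\mathbb P_{\theta^*+h/\sqrt{n}|\gamma}$, which I would do via contiguity.

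First, fix $\gamma\in\mathcal P$. Under $\mathbb P_{\theta^*+h/\sqrt{n}|\gamma}$ the observations are i.i.d.\ multinomials with weights $p_{\theta^*+h/\sqrt{n}}(\cdot|\gamma)$, and the three times continuous differentiability of $p_{.}(j|\alpha)$ near $\theta^*$ together with the non singularity of $I_{\theta^*}(\gamma)$ places us in the regular i.i.d.\ LAN framework. The classical MLE expansion (\cite[Thms 5.39, 7.12]{van}) combined with Le Cam's third lemma then yields
$$\sqrt{n}\bigl(\hat\theta_n^\gamma-(\theta^*+h/\sqrt{n})\bigr)\stackrel{\mathcal L-\mathbb P_{\theta^*+h/\sqrt{n}|\gamma}}= I_{\theta^*}(\gamma)^{-1/2}Z^\gamma,$$
with $Z^\gamma\sim\mathcal N(0,I_d)$.

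Next, I would invoke Lemma \ref{lem:convergence_two_estimator} to get $\sqrt{n}(\hat\theta_n-\hat\theta_n^\gamma)\to 0$ $\mathbb P_{\theta^*|\gamma}$-a.s., hence in $\mathbb P_{\theta^*|\gamma}$-probability. Since the i.i.d.\ multinomial submodel is LAN at $\theta^*$, Le Cam's first lemma implies that $(\mathbb P_{\theta^*+h/\sqrt{n}|\gamma}|_{\mathcal F_n})$ and $(\mathbb P_{\theta^*|\gamma}|_{\mathcal F_n})$ are mutually contiguous, so this convergence in probability also holds under $\mathbb P_{\theta^*+h/\sqrt{n}|\gamma}$. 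Slutsky's lemma then gives
$$\sqrt{n}\bigl(\hat\theta_n-(\theta^*+h/\sqrt{n})\bigr)\stackrel{\mathcal L-\mathbb P_{\theta^*+h/\sqrt{n}|\gamma}}= I_{\theta^*}(\gamma)^{-1/2}Z^\gamma.$$

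Finally, I would apply Lemma \ref{lem:conditionning_on_tail}(2) with the sequence $\theta_n^*:=\theta^*+h/\sqrt{n}$, taking $X_n^\gamma:=\sqrt{n}(\hat\theta_n-(\theta^*+h/\sqrt{n}))$ (which is actually the same random variable for every $\gamma$, but whose conditional limit depends on $\gamma$) and $X^\gamma:=I_{\theta^*}(\gamma)^{-1/2}Z^\gamma$, the $Z^\gamma$ being chosen mutually independent and independent of $\Gamma$. The conclusion of the lemma produces exactly
$$\sqrt{n}\bigl(\hat\theta_n-(\theta^*+h/\sqrt{n})\bigr)\stackrel{\mathcal L-\mathbb P_{\theta^*+h/\sqrt{n}}}= I_{\theta^*}(\Gamma)^{-1/2}Z,$$
as announced. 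The main obstacle is the second step: Lemma \ref{lem:convergence_two_estimator} only controls the difference $\hat\theta_n-\hat\theta_n^\gamma$ under $\mathbb P_{\theta^*|\gamma}$, whereas Slutsky with Step 1 requires control under the shifted measure $\mathbb P_{\theta^*+h/\sqrt{n}|\gamma}$. Contiguity of the i.i.d.\ LAN family is the natural bridge, and it is precisely what makes the reduction to the pure-component MLE go through along the shifted sequence.
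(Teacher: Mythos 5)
Your argument is correct and reaches the conclusion through the same basic pipeline as the paper: classical i.i.d.\ asymptotics for $\hat\theta_n^\gamma$, Lemma \ref{lem:convergence_two_estimator} to trade $\hat\theta_n^\gamma$ for $\hat\theta_n$, a Le Cam-type change of measure, and finally Lemma \ref{lem:conditionning_on_tail}(2) with $\theta_n^*=\theta^*+h/\sqrt n$ to reassemble the mixture. The one genuine difference is the order in which the last two ingredients are applied. The paper performs all substitutions (estimator and log-likelihood ratio, via Lemmas \ref{lem:convergence_P_to_p_gamma_in_log} and \ref{lem:convergence_two_estimator}) \emph{under the unshifted measure} $\mathbb P_{\theta^*|\gamma}$, where Lemma \ref{lem:convergence_two_estimator} applies verbatim, establishing joint convergence of $\bigl(\sqrt n(\hat\theta_n-(\theta^*+h/\sqrt n)),\,\ln\tfrac{\mathbb P_{\theta^*+h/\sqrt n}}{\mathbb P_{\theta^*}}\bigr)$ and only then invokes Le Cam's third lemma; this sidesteps the issue you flag as the main obstacle, since nothing ever needs to be transported to the shifted measure before the third lemma does it in one stroke. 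You instead move to $\mathbb P_{\theta^*+h/\sqrt n|\gamma}$ first and then must upgrade the $o(n^{-1/2})$ control of $\hat\theta_n-\hat\theta_n^\gamma$ to the shifted sequence, which you do correctly: mutual contiguity of the local i.i.d.\ experiments (Le Cam's first lemma) preserves convergence in probability to a constant, and Slutsky finishes. Both routes rest on the same contiguity of the component experiments; yours is slightly more modular (no joint weak convergence statement needed), while the paper's avoids having to restate Lemma \ref{lem:convergence_two_estimator} under the local alternatives. Your observation that $X_n^\gamma$ is the same random variable for all $\gamma$ while its limit law depends on $\gamma$ is exactly the right way to read Lemma \ref{lem:conditionning_on_tail}(2) here.
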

\begin{proof}
From standard results in parameter estimation\cite{van}, under Assumption {\bf ID} the statistical model $(\mathbb P_{\theta|\gamma},\theta\in\Theta)$ is LAN. Moreover, for any $h\in\Theta-\theta^*$, under the assumptions of the proposition,
\begin{equation}\label{eq:CLT_theta_gamma}
\lim_{n\to\infty}\sqrt{n}(\hat \theta_n^\gamma -(\theta^*+h/\sqrt{n}))\stackrel{\mathcal L-\mathbb P_{\theta^*|\gamma}}=I_{\theta^*}(\gamma)^{-\frac12} Z-h.
\end{equation}
Actually, the proof of $(\mathbb P_{\theta|\gamma},\theta\in\Theta)$ being LAN and the weak convergence \eqref{eq:CLT_theta_gamma} are based on the same Central Limit Theorem. It follows that,
\begin{equation*}
\begin{split}
\lim_{n\to\infty} \left(\sqrt{n}(\hat \theta_n^\gamma-(\theta^*+h/\sqrt{n})),\ \ln\frac{\mathbb P_{\theta^*+h/\sqrt{n}|\gamma}(\omega_1,\ldots,\omega_n)}{\mathbb P_{\theta^*|\gamma}(\omega_1,\ldots,\omega_n)}\right)\\
\stackrel{\mathcal L-\mathbb P_{\theta^*|\gamma}}=\left(I_{\theta^*}^{-\frac12}(\gamma)Z-h,\ h^{\mathsf{T}}I_{\theta^*}(\gamma)^{\frac12}Z-\frac12 h^{\mathsf T}I_{\theta^*}(\gamma)h\right).
\end{split}
\end{equation*}
 It follows then from Lemmas \ref{lem:convergence_P_to_p_gamma_in_log} and \ref{lem:convergence_two_estimator} that,
\begin{equation*}
\begin{split}
\lim_{n\to\infty} \left(\sqrt{n}(\hat \theta_n-(\theta^*+h/\sqrt{n})),\ \ln\frac{\mathbb P_{\theta^*+h/\sqrt{n}}(\omega_1,\ldots,\omega_n)}{\mathbb P_{\theta^*}(\omega_1,\ldots,\omega_n)}\right)\\
\stackrel{\mathcal L-\mathbb P_{\theta^*|\gamma}}=\left(I_{\theta^*}^{-\frac12}(\gamma)Z-h,\ h^{\mathsf{T}}I_{\theta^*}(\gamma)^{\frac12}Z-\frac12 h^{\mathsf T}I_{\theta^*}(\gamma)h\right).
\end{split}
\end{equation*}
Now, from Le Cam's third Lemma we get
$$\lim_{n\to\infty} \sqrt{n}(\hat\theta_n -(\theta^*+h/\sqrt{n}))\stackrel{\mathcal L-\mathbb P_{\theta^*+h/\sqrt{n}}}=I_{\theta^*}(\gamma)^{-\frac12}Z.$$
So that, Lemma \ref{lem:conditionning_on_tail} yields the proposition.
\end{proof}

\section{Applications to Quantum non-Demolition Measurement}\label{sec:QND}

As mentioned in the Introduction, the above development is motivated by some applications in quantum physics. In particular, as we will see, the above estimation results can be applied in the context of QND measurement. For the sake of completeness we recall briefly the QND model. For a complete overview of this model we refer to \cite{bbb1,bbb2}.

Let $\{e_\alpha\ |\ \alpha\in\mathcal P\}$ and $\{\psi_j\ |\ j\in\mathcal A\}$ be orthonormal basis of respectively $\mathbb C^d$ and $\mathbb C^l$. These last spaces are endowed with their canonical Hilbert space structure. In   the context of quantum physics, these basis will be  associated  with some physical quantities. Each vector of these basis describes the physical state corresponding to an almost sure value of said physical quantities. The Hilbert space $\mathbb C^d$ describes the quantum system that one   aims  to measure indirectly.  The Hilbert space  $\mathbb C^l$ describes a probe that will be used to measure indirectly. Now, we detail the usual setup of indirect measurement. It consists in measuring something on the probe after some interaction between the system and the probe. More precisely,
the interaction is described  through a unitary operator $U$ on $\mathbb C^d\otimes \mathbb C^l$.  For QND measurements, this operator may be written as 
$$U=\sum_{\alpha\in\mathcal P}\pi_{e_\alpha}\otimes U_\alpha.$$
Here, $\pi_{e_\alpha}$ is the  projector on  the line $\mathbb C e_\alpha$ and $(U_\alpha)_{\alpha\in\mathcal P}$ are unitary operators on  $\mathbb C^l$.  $(U_\alpha)$ depends on the unknown parameters of the experiment.   The state of the system is represented by  the unit vector\footnote{Actually the state corresponds to the line given by the direction $\phi_0$. Particularly two states are equivalent if they differ only by a phase. We no longer mention it and always mean equality up to a phase when comparing two states.} $\phi_0\in\mathbb C^d$. This vector may be expanded on the first basis:
$$\phi_0=\sum_{\alpha}\langle e_\alpha,\phi_0\rangle\,e_\alpha.$$
The state of the probe is represented by a unit vector $\psi \in \mathbb  C^l$. After the interaction the joint system--probe state is  the unit vector $ U(\phi_0\otimes\psi)\in \mathbb C^d\otimes \mathbb C^l$. This vector may be expanded as,
$$U(\phi_0\otimes\psi)=\sum_{\alpha}\langle e_\alpha,\phi_0\rangle\,e_\alpha\otimes U_\alpha \psi.$$
We are now in position to see how multinomial mixtures encompass the law of sequence of measurement results in QND measurements.

To begin with, let us  assume that $\phi_0=e_\alpha$ for some $\alpha\in\mathcal P$. Then, 
$$q_0(\alpha):=|\langle e_\alpha,\phi_0\rangle|^2=1.$$ 
Further, from the definition of $U$,
$$U(\phi_0\otimes\psi)=e_\alpha\otimes U_\alpha\psi.$$
This property justifies the denomination {\it non demolition} explained in the Introduction. If the system state is $e_\alpha$ before the interaction it remains $e_\alpha$ after the interaction. Quantum mechanics tells that measuring a physical state in $\{\psi_j\ |\ j\in\mathcal A\}$ has the following probability distribution
$$\mathbb P[\mbox{\tt observing}\,\,\psi_j]=\vert \langle \psi_j,U_\alpha\psi\rangle\vert^2.$$
We set 
$$p(j\vert\alpha):=\vert \langle \psi_j,U_\alpha\psi\rangle\vert^2.$$
We will later see how  $U_\alpha$ depends on the unknown parameter $\theta$ (and will add an index $\theta$ to this notation). 
If $\phi_0\not\in\{e_\alpha\ |\ \alpha\in\mathcal P\}$, the probe measurement outcome is $\psi_j$ with probability
\begin{align*}
\pi_0(j)=\mathbb P[\mbox{\tt observing}\,\,\psi_j]&=\sum_\alpha \vert\langle e_\alpha,\phi_0\rangle\vert^2\vert \langle \psi_j,U_\alpha\psi\rangle\vert^2\\
&=\sum_{\alpha}q_0(\alpha)p(j\vert\alpha)
\end{align*}
The quantum mechanics projection postulate implies that if $\psi_j$ is the measurement outcome then the joint system--probe state  becomes 
$$\tilde\phi_1(j)=\frac{\sum_{\alpha}\langle \phi_0,\alpha\rangle\langle U_\alpha\psi,j\rangle}{\sqrt{\pi_0(j)}}\otimes \psi_j$$
Hence, the system state goes from $\phi_0$ to
\begin{equation}\label{eq:state_markov_chain}
\phi_1(j)=\frac{\sum_{\alpha}\langle \phi_0,\alpha\rangle\langle U_\alpha\psi,j\rangle}{\sqrt{\pi_0(j)}}.
\end{equation}
In others words, this is the new system state conditioned on the outcome $\psi_j$ (for the probe).  This system state update leads to the update of $q_0(\alpha)$,
$$q_1(\alpha):=|\langle e_\alpha,\phi_1\rangle|^2=q_0(\alpha)\frac{p(j\vert\alpha)}{\pi_0(j)}$$
This procedure results in the definition of random variables $\phi_1$ and $(q_1(\alpha))_{\alpha\in\mathcal P}$ whose laws are images of the law $\mathbb P[\mbox{\tt observing}\,\,\psi_j]=\pi_0(j)$. 

Now, we repeat the previous steps. Let $(X_n)$ be the resulting sequence of outcome (identifying $\psi_j$ and $j$). We have
\begin{align}
\mathbb P[X_1=j]=\pi_0(j)&=\sum_{\alpha}q_0(\alpha)p(j\vert\alpha),\quad  (j\in\mathcal A),\\
\mathbb P[X_1=j,X_2=j']&=\sum_{\alpha}q_0(\alpha)p(j\vert\alpha)p(j'\vert\alpha),\quad (j,j'\in\mathcal A), \\
\mathbb P[X_1=j_1,\ldots, X_n=j_n]&=\sum_{\alpha}q_0(\alpha)\prod_{k=1}^np(j_k\vert\alpha),\quad (n\in\mathbb N_*;\; j_1,\cdots,j_n\in\mathcal A).
\end{align} 
Using Kolomgorov's consistency Theorem 
we thus have
defined the law of the random sequence
$(X_n)$.  For $\alpha\in\mathcal P$, let $\mathbb P_\alpha$ be  the probability measure such that 
\begin{equation}
\mathbb P_\alpha[\{(j_1,\ldots, j_n,\omega):\; \omega\in \Omega\}]=\prod_{k=1}^np(j_k\vert\alpha).
\label{Koko}
\end{equation}
Then the law of $(X_n)$ is the mixture of multinomials
$\sum_\alpha q_0(\alpha)\mathbb P_\alpha$.
Let turn now to the statistical model. For any $\alpha\in\mathcal P$, the unitary operator $U_\alpha=U_\alpha(\theta)$ depends  on the unknown parameter $\theta$. Hence, we wish to study the statistical model $(\mathbb P_\theta)$ with
$$\mathbb P_\theta:=\sum_\alpha q_0(\alpha)\mathbb P_{\theta,\alpha}$$
where $\mathbb P_{\theta,\alpha}$ is defined in (\ref{Koko}) with
$$p_{\theta}(j\vert\alpha)=\vert \langle U_\alpha(\theta)\psi,j\rangle\vert^2.$$
Now, all the results developed in the last sections hold whenever the regularity assumptions are assumed directly on $(U_{\alpha}(\cdot))_{\alpha\in\mathcal P}$. 
Let us now unravel what is the Fisher information for the QND measurement model. 

\medskip
\noindent
Let $\alpha\in\mathcal P$ and $\Theta$ be some given non empty open subset of an Euclidean space. Further, let $H_\alpha(\cdot)$ be a differentiable function on $\Theta$ taking its values in  the set of self-adjoint $l\times l$ matrices. 
For $\alpha\in\mathcal P$, set $U_{\alpha}(\cdot):=\exp(-i H_\alpha(\cdot))$. Then, a direct differentiation gives,
$$\partial_{\theta_k} \ln(p_\cdot(i\vert\alpha))=2\operatorname{Im}\left(\frac{\langle\partial_{\theta_k} H_\alpha(\cdot) U_\alpha(\cdot)\psi,\psi_j\rangle}{\langle U_{\alpha}(\cdot)\psi,\psi_j\rangle}\right).$$
It follows that,
$$(I_\cdot(\alpha))_{kl}=4\sum_{j\in\mathcal A}p_\cdot(j|\alpha) \operatorname{Im}\left(\frac{\langle\partial_{\theta_k} H_\alpha(\cdot) U_\alpha(\cdot)\psi,\psi_j\rangle}{\langle U_{\alpha}(\cdot)\psi,\psi_j\rangle}\right)\operatorname{Im}\left(\frac{\langle\partial_{\theta_l} H_\alpha(\cdot) U_\alpha(\cdot)\psi,\psi_j\rangle}{\langle U_{\alpha}(\cdot)\psi,\psi_j\rangle}\right).$$
More particularly, in the relevant one dimensional case $d=1$ where $H_{\alpha}(\theta)=\theta H_\alpha$ for some self adjoint matrices $H_\alpha$,
$$\partial_\theta \ln(p_\theta(i\vert\alpha))=2\operatorname{Im}\left(\frac{\langle H_\alpha U_\alpha(\theta)\psi,\psi_j\rangle}{\langle U_{\alpha}(\theta)\psi,\psi_j\rangle}\right).$$
So that,
$$I_\theta(\alpha)=4\sum_{j\in\mathcal A}p_\theta(j|\alpha) \operatorname{Im}^2\left(\frac{\langle H_\alpha  U_\alpha(\theta)\psi,\psi_j\rangle}{\langle U_{\alpha}(\theta)\psi,\psi_j\rangle}\right).$$

\begin{figure}[h!]
\includegraphics[width=.7\textwidth]{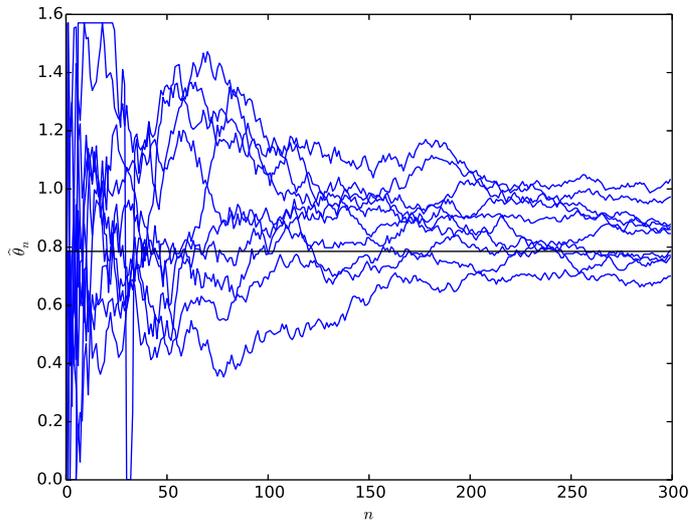}
\caption{Numerical simulation of $10$ realizations of $(\theta_n)$ for the toy model described in the text with $q(\alpha)$ proportional to $3.46^\alpha/\alpha!$ for $\alpha\in\{1,\ldots,8\}$. The black line corresponds to the target value $\theta^*=\pi/4$.\label{fig:simu}}
\end{figure}

\noindent{\bf Toy QND example:}\\
We conclude our paper  by an illustration of our results on the Haroche's group experiment \cite{guerlin} mentionned in the Introduction. In that experiment $d=8$ and $\mathcal A=\{0,1\}\times\{0,1,2,3\}$ hence $l=8$ and $j$ is defined through a bijection $j\equiv (x,a)$. The expression of $p(j\vert\alpha)$ is provided in \cite{guerlin}. Expressed in  our notations we have here,\footnote{Remark that in \cite{guerlin} $\alpha=0,\ldots, 7$ so Assumption {\bf ID} is not verified for $\alpha=0$. More precisely $\theta_4$ may not be identifiable. Though for the ideal value of $\theta=\pi/4$, $\alpha=0$ is equivalent to $\alpha=8$ so we use this value. Hence, we identify the zero photon state with the eight one at the opposite of what is done in \cite{guerlin}.}
$$p_\theta(x,a | \alpha)=(\theta_5+\theta_6 \cos( \alpha\theta_4 + \theta_a+x\pi))/8.$$
Hence,  $\dim \Theta=7$. The ideal values of the parameters are $\theta_5=\theta_6=1$, $\theta_4=\pi/4$ and $\theta_a=(2-a)\pi/4$. Experimentally some imperfections imply that $\theta_6$ is smaller than $1$. It is close to $\theta=0.674\pm0.004$. Besides this limitation in \cite{guerlin} the authors find parameters close to their target values using a best fit to the empirical distribution. Setting $\theta_6=0.674$, it is easy to check that {\bf ID} holds for the ideal parameters and in a small enough but sufficiently large neighborhood of the true parameters. Moreover, all the functions $\theta\mapsto p_{\theta}(x,a|\alpha)$ are entire analytic. 
\noindent
We limit ourselves to the estimation of $\theta_4$ the other parameters are fixed to their true values except $\theta_6$. We set $\theta_6=0.674$. Further, we take $\Theta=[\pi/8,3\pi/8]$ and,
$$p_\theta(x,a|\alpha):=(1+0.674\cos(\alpha\theta +(2-a)\pi/4+x\pi))/8.$$
It follows that
$$I_\theta(\alpha)=\sum_{a=0}^3\alpha^2\frac{0.674^2}{4}\frac{\sin^2(\alpha\theta +(2-a)\pi/4)}{1-0.674^2\cos^2(\alpha\theta +(2-a)\pi/4)}.$$
\noindent
Fisher information is not singular at $\theta=\pi/4$. Figure \ref{fig:simu} depicts some simulations of $(\hat \theta_n)$ for this toy model.
\bigskip

\noindent \textbf{Acknowledgments} T.B and C.P are supported by ANR project StoQ ANR-14-CE25-0003-01 and CNRS InFIniTi project MISTEQ. The research of T.B. has been supported by ANR-11-LABX-0040-CIMI within the program ANR-11-IDEX-0002-02.

\appendix
\section{Uniform convergence of logarithm of sums of uniformly convergent sequences}

\begin{lemma}\label{lem:log_sum_unfi_conv}
Let $(a_n)$ and $(b_n)$ be two sequence of strictly positive functions with common domain of definition such that there exists two functions $\ell_a$ and $\ell_b$ such that,
$$\lim_{n\to\infty}\left\|\tfrac1n\ln a_n- \ell_a\right\|_{\infty}=0\quad\text{and}\quad\lim_{n\to\infty}\left\|\tfrac1n\ln b_n- \ell_b\right\|_{\infty}=0$$
with $\|\cdot\|_\infty$ the sup norm. Then,
$$\lim_{n\to\infty}\left\|\tfrac1n \ln(a_n+b_n)-\max(\ell_a,\ell_b)\right\|_\infty=0.$$
\end{lemma}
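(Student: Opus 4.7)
The plan is a tropical, log-sum-exp style squeeze. Since $a_n$ and $b_n$ are strictly positive functions on the common domain, one has the pointwise double inequality
$$\max(a_n,b_n)\;\leq\;a_n+b_n\;\leq\;2\max(a_n,b_n).$$
Taking $\tfrac1n\ln$ of all three members, using monotonicity of $\ln$ together with the identity $\ln\max(a_n,b_n)=\max(\ln a_n,\ln b_n)$, yields the pointwise estimate
$$0\;\leq\;\tfrac1n\ln(a_n+b_n)-\max\!\left(\tfrac1n\ln a_n,\tfrac1n\ln b_n\right)\;\leq\;\tfrac{\ln 2}{n},$$
and hence
$$\left\|\tfrac1n\ln(a_n+b_n)-\max\!\left(\tfrac1n\ln a_n,\tfrac1n\ln b_n\right)\right\|_\infty\;\leq\;\tfrac{\ln 2}{n}\;\xrightarrow[n\to\infty]{}\;0.$$

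Next, I would use the fact that $\max\colon\mathbb R^2\to\mathbb R$ is $1$-Lipschitz for the sup norm on $\mathbb R^2$: for any reals $x,y,x',y'$,
$$|\max(x,y)-\max(x',y')|\;\leq\;\max(|x-x'|,|y-y'|).$$
Applied pointwise to the functions involved, this gives
$$\left\|\max\!\left(\tfrac1n\ln a_n,\tfrac1n\ln b_n\right)-\max(\ell_a,\ell_b)\right\|_\infty\;\leq\;\max\!\left(\left\|\tfrac1n\ln a_n-\ell_a\right\|_\infty,\ \left\|\tfrac1n\ln b_n-\ell_b\right\|_\infty\right),$$
and the right hand side tends to $0$ by assumption.

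The conclusion then follows by combining the two displayed estimates via the triangle inequality. There is no real obstacle here: this is just an elementary sup-norm squeeze, the only point worth noting being that because $a_n,b_n>0$ the logarithms are well defined, and because $\max$ on $\mathbb R^2$ is non-expansive, uniform convergence is preserved under it.
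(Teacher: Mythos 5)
Your proof is correct and follows essentially the same squeeze as the paper: both arguments rest on $\max(a_n,b_n)\leq a_n+b_n\leq 2\max(a_n,b_n)$ and control the resulting error by $\epsilon_n+\tfrac1n\ln 2$. The only cosmetic difference is that you pass through the $1$-Lipschitz property of $\max$ on $\mathbb R^2$, whereas the paper obtains the lower bound by selecting pointwise the function whose limit realizes $\max(\ell_a,\ell_b)$; the two devices are interchangeable here.
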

\begin{proof}
Let $D$  be the common domain of definition of the functions.
Assumptions imply there exists a sequence $(\epsilon_n)$ of strictly positive numbers such that $\lim_{n\to\infty}\epsilon_n=0$ and
$$\max\left(\left\|\tfrac1n\ln a_n- \ell_a\right\|_{\infty},\left\|\tfrac1n\ln b_n- \ell_b\right\|_{\infty}\right)\leq \epsilon_n.$$
Since the function $\ln$ is non decreasing, we deduce first that for any $x\in D$,
$$\tfrac1n \ln(a_n(x)+b_n(x))-\max(\ell_a(x),\ell_b(x))\leq \epsilon_n+\tfrac1n \ln 2.$$
Second, let $c(x)$ be such that $\ell_c=\max(\ell_a,\ell_b)$. Then for any $x\in D$,
$$-\epsilon_n\leq \tfrac1n \ln c_n(x) - \ell_c(x)\leq \tfrac1n \ln(a_n(x)+b_n(x)) - \max(\ell_a(x),\ell_b(x))$$
and the Lemma holds.
\end{proof}

\begin{thebibliography}{99}

\bibitem{jf} M. Ballesteros, M. Fraas, J. Fr\"ohlich and B. Schubnel, {\it Indirect acquisition of information in quantum mechanics.} J. Stat. Phys. {\bf 162} (2016) 924--958.

\bibitem{bbb1} M. Bauer and D. Bernard, {\it Convergence of repeated quantum nondemolition measurements and wave-function collapse.} Phys. Rev. A {\bf 84} (2011) 044103

\bibitem{bbb2} M. Bauer, T. Benoist and D. Bernard, {\it Repeated quantum non-demolition measurements: convergence and continuous time limit.} Ann. Henri Poincar\'e {\bf 14} (2013) 639--679.

\bibitem{bbb3} M. Bauer, D. Bernard and T. Benoist, {\it Iterated stochastic measurements.} J. Phys. A {\bf 45} (2012) 494020.

\bibitem{bp} T. Benoist and C. Pellegrini, {\it Large time behavior and convergence rate for quantum filters under standard non demolition conditions.} Comm. Math. Phys. {\bf 331} (2014) 703--723.

\bibitem{QND_origin} V. B. Braginsky, YI. Vorontsov and K. S. Thorne.  {\it Quantum nondemolition measurements.} Science, {\bf 209}(4456) (1980) 547--57.

\bibitem{Guta4} C. Catana, L. Bouten and M. Gu\c t\u a, {\it Fisher informations and local asymptotic normality for continuous-time quantum Markov processes.} J. Phys. A {\bf 48} (2015) 365301.

\bibitem{guerlin} C. Guerlin, J. Bernu, S. Deleglise, C. Sayrin, S. Gleyzes, S. Kuhr, M. Brune, J. M. Raimond and S. Haroche, {\it Progressive field-state collapse and quantum non-demolition photon counting} Nature {\bf 448} (2007) 889--893.

\bibitem{Guta1} M. Gu\c t\u a and A. Jen\v cov\'a, {\it Local asymptotic normality in quantum statistics}  
Commun. Math. Phys., {\bf 276} (2007) 34--379.

\bibitem{Guta3} M. Gu\c t\u a and  J. Kiukas,  {\it Equivalence classes and local asymptotic normality in system identification for quantum Markov chains}. arXiv preprint arXiv:1402.3535 (2014).

\bibitem{holevo} A. S. Holevo, {\it Statistical structure of quantum theory.} Springer, 2003.

\bibitem{Guta2} J. Kahn and M. Gu\c t\u a, {\it Local asymptotic normality and optimal estimation for finite dimensional quantum systems} in Quantum stochastics and information: Statistics, Filtering and Control, V. P. Belavkin and M. Gu\c t\u a editors, World Scientific 2008.

\bibitem{Rouchon} P. Six, P. Campagne-Ibarcq, L. Bretheau, B. Huard and P. Rouchon, {\it Parameter estimation from measurements along quantum trajectories}, Proc. IEEE Conf. on Decision and Control, (2015) 7742--7748.  

\bibitem{van} A. W. van der Vaart, {\it Asymptotic Statistics} Cambridge Series in Statistical and Probabilistic Mathematics, Cambridge University Press 1998

\end{thebibliography}
\end{document}